\newcommand{\MPal}{\mathsf{MPal}}
\newcommand{\LPal}{\mathsf{LPal}}
\newcommand{\enc}{\mathsf{enc}}
\newcommand{\RMQ}{\mathsf{RMQ}}
\newcommand{\calC}{\mathcal{C}}
\newcommand{\calD}{\mathcal{D}}
\title{Almost succinct representation of maximal palindromes}
\author{Takuya Mieno}{University of Electro-Communications, Japan}{}{}{}
\author{Tomohiro I}{Kyushu Institute of Technology, Japan}{}{}{}
\author{Valerio Stancanelli}{Università di Pisa, Italy}{}{}{}
\authorrunning{T. Mieno et al.} 
\keywords{palindromes, succinct data structures, internal queries}
\begin{document}

\maketitle

\begin{abstract}
  Palindromes are strings that read the same forward and backward.
  The computation of palindromic structures within strings is a fundamental problem in string algorithms,
  being motivated by potential applications in formal language theory and bioinformatics.
  Although the number of palindromic factors in a string of length $n$ can be quadratic,
  they can be implicitly represented in $O(n \log n)$ bits of space
  by storing the lengths of all maximal palindromes in an integer array,
  which can be computed in $O(n)$ time~[Manacher, 1975].
  In this paper, for any positive constant $\epsilon < 1$,
  we propose a novel $(3(1+\epsilon)n + o(n))$-bit representation of all maximal palindromes in a string,
  which enables $O(1)$-time retrieval of the length of the maximal palindrome centered at any given position.
  The data structure can be constructed in $O(n)$ time and $O(n)$-bit working space from the input string of length $n$.
  Since Manacher's algorithm and the notion of maximal palindromes are widely utilized
  for solving numerous problems involving palindromic structures,
  our compact representation will accelerate the development of more space-efficient solutions to such problems.
  Indeed, as the first application of our compact representation of maximal palindromes,
  we present a data structure of size $O(n)$ bits that can compute 
  the longest palindrome appearing in any given factor of a string of length $n$ in $O(\log n)$ time.
\end{abstract}

\section{Introduction}\label{sec:intro}
A \emph{palindrome} is a string that reads the same forward and backward.
For centuries, palindromes have been enjoyed as elements of wordplay and puzzles.
In recent years, they have also gained attention in Theoretical Computer Science
due to their combinatorial properties and relevance in fields such as formal language theory, computability theory, and bioinformatics
(see also~\cite{GalilS76,KnuthMP77,0086373,KosolobovRS15,Gusfield1997,KolpakovK09,LangermanW16,MBCT2023} and references therein).
From an algorithmic perspective,
one of the most well-known algorithms for detecting palindromic factors is \emph{Manacher's algorithm}~\cite{Manacher75}.
Manacher's algorithm can efficiently enumerate all \emph{maximal palindromes} in a given string in linear time,
where a maximal palindrome is defined as a palindromic factor that cannot be further extended while preserving its center position.
Alternative approaches to enumerate maximal palindromes based on suffix trees~\cite{Weiner73}
are also known and have been applied depending on the context~\cite{Gusfield1997,MienoFI22}.
Manacher's algorithm is particularly notable
for its algorithmic elegance
as well as
for its applicability to general unordered alphabets.
Thanks to these advantages, it plays a central role in many algorithms
involving palindromic structures, even today~\cite{NarisadaHYS20,RubinchikS20,Charalampopoulos22,EllertGS25}.
Furthermore, over the past decade,
several developments have been made in designing data structures
that compactly represent palindromic information and
efficiently support various queries~\cite{RubinchikS18,RubinchikS17,FunakoshiNIBT21,AmirCPR20,MitaniMSH23,MienoF24,Itzhaki26}.

Although various algorithms and data structures related to palindromes have been developed,
it is somewhat surprising that sublinear $o(n)$-word representations of all maximal palindromes remain largely unexplored.
Very recently, Itzhaki~\cite{Itzhaki26} proposed an $O(n)$-bit representation of maximal palindromes in a string of length $n$,
which allows retrieving the length of each maximal palindrome in $O(1)$ time upon a query.
We note that our compact representation has two quantitative advantages over the prior work:
the constant factor of $n$ in the space complexity is
smaller,
and our representation can be computed in linear time and using $O(n)$ bits of working space.

In this paper,
we first provide a simple $3n$-bit encoding for the maximal palindromes in a string of length $n$, which, however, does not support efficient access to their lengths~(Lemma~\ref{lem:encode}).
By using this encoding,
we then propose a data structure of size $3(1+\epsilon)n + o(n)$ bits representing all maximal palindromes in a string
that supports \emph{constant-time} retrieval of the length of each maximal palindrome centered at any given position,
where $\epsilon < 1$ is an arbitrarily small positive constant~(Theorem~\ref{thm:main}).
The data structure can be constructed in $O(n)$ time and using $O(n)$ bits of working space, given a string of length $n$.
We note that, in general,
if a $kn$-bit encoding method to represent maximal palindromes in a string is given,
the size of our data structure becomes $k(1+\epsilon)n + o(n)$ bits.
This implies that
our proposed method can achieve an \emph{almost succinct} representation.
Namely, the space complexity is within a factor of $(1+\epsilon)$ of the information-theoretic lower bound, provided that the underlying encoding is optimal.
In addition, as an application of Theorem~\ref{thm:main},
we present a data structure of size $O(n)$ bits
that can answer any \emph{internal longest palindrome query}~\cite{MitaniMSH23},
which finds the longest palindrome within a query factor,
in $O(\log n)$ time (Theorem~\ref{thm:internal}).

\subsection*{Related work}
Manacher's algorithm~\cite{Manacher75} can output an integer array of linear length, occupying $O(n \log n)$ bits of space,
that stores the lengths of all maximal palindromes in a string $w$ of length $n$.
If we are interested in the variety of \emph{distinct} palindromic factors in $w$, rather than their occurrences,
we can store them using the \emph{palindromic tree}~(a.k.a.~\emph{EERTREE}) data structure,
which requires $O(d\log n)$ bits of space~\cite{RubinchikS18}, where $d$ is the number of distinct palindromic factors in $w$
and is known to be at most $n+1$~\cite{DroubayJP01}.
The palindromic tree of $w$ can be extremely small when $w$ has very few distinct palindromic factors;
however, it does not support locating occurrences of palindromic factors in $w$.
Charalampopoulos et al.~\cite{Charalampopoulos22} presented
an $O(n/\log_{\sigma}n)$-time~(namely, sublinear-time)
algorithm for computing a longest palindromic factor in $w$,
assuming that the alphabet size $\sigma$ is small and that $w$ is given in a \emph{packed representation} occupying $O(n \log \sigma)$ bits of space.
Recently, Mieno and Funakoshi~\cite{MienoF24} proposed an $O(n)$-bit data structure that represents
all \emph{unique} palindromic factors in $w$, where a factor is said to be unique if it occurs exactly once in $w$.

\section{Preliminaries}\label{sec:pre}
\subsection{Intervals, strings, and palindromes}

For non-negative values $i, j$ with $i \le j$,
let $[i.. j] = \{v \in \mathbb{Z} \mid i \le v \le j\}$ be the set of consecutive integers from $i$ to $j$.
Further, let $[i.. j) = [i.. j] \setminus \{j\}$.
Note that $[i.. i) = \emptyset$.
We sometimes refer to a set $[i.. j]$ of consecutive integers as an \emph{interval}.

Let $\Sigma$ be an \emph{alphabet} of size $\sigma$.
An element in $\Sigma$ is called a \emph{character}.
An element in $\Sigma^\star$ is called a \emph{string}.
The length of string $w$ is denoted by $|w|$.
The string of length zero is called the \emph{empty string}.
If $w = xyz$ holds for strings $w, x, y, z \in \Sigma^\star$,
then $x$, $y$, and $z$ are called
a \emph{prefix}, a \emph{factor}, and a \emph{suffix} of $w$,
respectively.
For a string $w$ and integers $i, j \in [0.. |w|)$,
we denote by $w[i]$ the $i$-th character of $w$, and
by $w[i.. j]$ the factor of $w$ that starts at position $i$ and ends at position $j$.
Similar to the notation of intervals, we denote $w[i.. j) = w[i.. j-1]$.
For convenience, let $w[i.. j]$ be the empty string if $i > j$.
Further, let $w[i.. j] = w[i.. |w|-1]$ if $j \ge |w|$.
We denote by $w^R$ the \emph{reversal} of string $w$,
i.e., $w^R = w[|w|-1]w[|w|-2]\cdots w[0]$.
A positive integer $p$ is called a \emph{period} of a string $w$
if $w[i] = w[i+p]$ holds for all $i \in [0.. |w|-p)$.
We also say that $w$ has a period $p$ if $p$ is a period of $w$.
If string $w$ has a period $p^\star$ with $p^\star \le |w|/2$, then
$w$ is said to be \emph{periodic}.

A string $w$ is called a \emph{palindrome} if $w = w^R$ holds.
Note that the empty string is a palindrome.
A factor of a string is said to be \emph{palindromic}
if it is a palindrome.
An occurrence of a palindromic factor $w[i.. j]$ of $w$ is called a \emph{maximal palindrome} 
if (1) $i = 0$, (2) $j = |w|-1$, or (3) $w[i-1] \ne w[j+1]$.
Let $\calC_{w} = [0.. |w|) \cup  \{c+0.5\mid c \in [0.. |w|-1)\}$ be a set of rational numbers.
If $w[i.. j]$ is a palindrome, then we say that the rational number $c = (i+j)/2 \in \calC_w$ is the \emph{center} of $w[i.. j]$,
or $w[i.. j]$ is \emph{centered at} $c$.
For notational simplicity, we sometimes say that a factor is the \emph{MPal at $c$}
if the factor is the maximal palindrome centered at $c$.

\subsection{Model of computation}
In what follows, we arbitrarily fix the input string $w$ of length $n$.
In this paper, when we omit the base of $\log$, it is two.
Our computation model is a standard \emph{word RAM model} of word size $\Omega(\log n)$.
We assume that every character from $\Sigma$ fits within a single machine word,
i.e., given two characters, the equality of them can be determined in constant time.

\subsection{Arrays \texorpdfstring{$\MPal$}{MPal} and \texorpdfstring{$\LPal$}{LPal}}
Manacher~\cite{Manacher75} proposed a linear-time algorithm to compute the longest palindromic factors in a given string,
which inherently computes all the maximal palindromes in the string.
By running Manacher's algorithm, we can obtain a representation of the maximal palindromes in $w$ in linear time.
In this paper, we define the \emph{maximal palindrome array} $\MPal_w$ of length $2n-1$ as follows:
$\MPal_w[2c]$ stores the length of the MPal at $c$ in $w$
for every (half) integer $c\in \calC_w$.
Next, we define the \emph{longest palindrome array} $\LPal_w$ of length $n$ such that
$\LPal_w[j]$ stores the length of the longest palindrome that is a suffix of $w[0.. j]$
for every $j\in [0.. n)$.
We may omit subscripts if they are clear from the context.
The following lemma is known:
\begin{lemma}[Lemma~2 of~\cite{IIT13}]\label{lem:lpal}
  For any strings $x$ and $y$, $\MPal_x = \MPal_y$ iff $\LPal_x = \LPal_y$.
\end{lemma}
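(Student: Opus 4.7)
Both directions are proved by exhibiting each array as an explicit function of the other.

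For the direction $(\Rightarrow)$, I observe that a palindromic suffix of $w[0..j]$ of length $2r$ is precisely the palindrome of length $2r$ centered at $c = j - r + 1$, which exists in $w$ iff $\MEPal[c] \ge 2r$. Hence
\[
  \LEPal[j] = \max\{2(j-c+1) : c \in [0, j+1],\ \MEPal[c] \ge 2(j-c+1)\},
\]
with the convention $\max \emptyset = 0$. This expresses $\LEPal$ as a function of $\MEPal$, so $\MEPal_x = \MEPal_y$ entails $\LEPal_x = \LEPal_y$.

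For the direction $(\Leftarrow)$, my plan is to describe a procedure that, using only $\LEPal$, decides for any pair $(c, r)$ whether $w[c-r..c+r-1]$ is a palindrome; this yields $\MEPal[c] = \max\{2r : w[c-r..c+r-1]\text{ is a palindrome}\}$, so $\MEPal$ is also a function of $\LEPal$. Set $j = c + r - 1$ and $L = \LEPal[j]$, and split into three cases. If $L < 2r$, then no palindromic suffix of $w[0..j]$ reaches length $2r$, so the answer is no. If $L = 2r$, then $w[c-r..c+r-1]$ coincides with the longest palindromic suffix and the answer is yes. If $L > 2r$, write $P = w[j-L+1..j]$; then $w[c-r..c+r-1]$ is the length-$2r$ suffix of $P$. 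Using the palindromicity $P[i] = P[L-1-i]$, a direct index computation shows that the length-$2r$ suffix of $P$ equals the reverse of the length-$2r$ prefix of $P$, so the former is a palindrome iff the latter is. The latter equals $w[c'-r..c'+r-1]$ where $c' = c - (L - 2r) < c$, so the problem reduces to the same predicate at the strictly smaller endpoint $j' = j - (L - 2r)$.

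Since $j$ strictly decreases at each reduction, the procedure terminates, and every lookup it performs is into $\LEPal$; hence $\LEPal_x = \LEPal_y$ forces $\MEPal_x = \MEPal_y$. The main obstacle is the reflection step in the case $L > 2r$: one must carefully invoke $P[i] = P[L-1-i]$ to match the length-$2r$ suffix of $P$ with the reverse of its length-$2r$ prefix and to extract the new center $c' = c - (L - 2r)$. The remaining cases, the termination of the recursion, and the boundary handling when $c - r = 0$ or $c + r - 1 = n - 1$ are then routine.
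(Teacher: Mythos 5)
Your proof is correct, and it is in fact more than the paper provides: the paper omits the argument entirely, deferring to Lemma~2 of~\cite{IIT13}, whose proof proceeds along essentially the same lines as yours (the forward direction by expressing $\LEPal[j]$ as a maximum over $\MEPal$, the converse by using the reflection inside the longest palindromic suffix to reduce the palindromicity test at $(c,r)$ to one at a strictly smaller right endpoint). Both directions check out, including the index computation $c' = c - (L-2r)$ and the termination of the recursion, so your write-up serves as a valid self-contained proof of the lemma.
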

Using Lemma~\ref{lem:lpal}, we can compactly encode array $\MPal$.
\begin{lemma}\label{lem:encode}
  Array $\MPal_x$ for a string $x$ can be encoded using $3|x|-2$ bits.
\end{lemma}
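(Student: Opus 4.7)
Let $n = |x|$. The plan is to reduce the task to encoding $\LEPal_x$ via Lemma~\ref{lem:lpal}, and then exploit a simple one-step growth bound that is very clean for $\LEPal$. First I would establish the growth bound $\LEPal_x[j+1] \le \LEPal_x[j] + 2$ for every $j \in [0, n-1)$: if $x[j+2-\ell..j+1]$ is an even palindromic suffix of $x[0..j+1]$ of length $\ell = \LEPal_x[j+1]$, then stripping its outer matching character pair yields an even palindromic suffix $x[j+3-\ell..j]$ of $x[0..j]$ of length $\ell-2$, giving $\LEPal_x[j] \ge \LEPal_x[j+1] - 2$. Together with $\LEPal_x[0] = 0$ and the fact that every entry of $\LEPal_x$ is a non-negative even integer, this lets me define, for each $j \in [0, n-1)$, the non-negative integer
\[
  k_j \;=\; \bigl(\LEPal_x[j] + 2 - \LEPal_x[j+1]\bigr)/2.
\]

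Next, I would encode each $k_j$ in unary as $k_j$ zeros followed by a single $1$, and concatenate these codes in order of $j$. Each block is self-delimiting, so a decoder that knows $n$ and starts from $\LEPal_x[0] = 0$ can read off $k_0, k_1, \ldots, k_{n-2}$ in sequence and recover $\LEPal_x[j+1] = \LEPal_x[j] + 2 - 2 k_j$ iteratively, reconstructing the entire array $\LEPal_x$. By Lemma~\ref{lem:lpal}, $\MEPal_x$ is then uniquely determined, so this gives a valid injective encoding of $\MEPal_x$.

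The bit length of the encoding is $\sum_{j=0}^{n-2}(k_j + 1) = (n-1) + \sum_{j=0}^{n-2} k_j$. A telescoping computation gives
\[
  \sum_{j=0}^{n-2} k_j \;=\; (n-1) + \tfrac{1}{2}\bigl(\LEPal_x[0] - \LEPal_x[n-1]\bigr) \;=\; (n-1) - \LEPal_x[n-1]/2 \;\le\; n-1,
\]
so the total is at most $2(n-1) = 2|x|-2$ bits, exactly matching the claim.

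The only real obstacle I foresee is pinning down the growth bound $\LEPal_x[j+1] \le \LEPal_x[j] + 2$; once that is in hand, the unary encoding together with the telescoping bound on $\sum k_j$ follows mechanically. The base value $\LEPal_x[0] = 0$ is precisely what makes the telescoping sum collapse so as to save the $2$ bits over the naive $2n$ count.
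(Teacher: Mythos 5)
Your proposal is correct and is essentially the paper's own encoding: since the center of the longest even-palindromic suffix of $x[0..j]$ is $j+1-\LEPal_x[j]/2$, your quantity $k_j$ is exactly the difference between consecutive centers, and your growth bound $\LEPal_x[j+1]\le\LEPal_x[j]+2$ is equivalent to the paper's claim that these centers form a non-decreasing sequence, with the same unary difference encoding and the same telescoping count yielding $2|x|-2$ bits. The only (cosmetic) divergence is that you prove the monotonicity directly by stripping the outer character pair of the palindromic suffix, whereas the paper argues by contradiction on the center positions; both are valid.
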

\begin{proof}
  By Lemma~\ref{lem:lpal}, it is enough to encode $\LPal_x$ in $3|x|-2$ bits
  instead of encoding $\MPal_x$.
  Consider two positions $j_1, j_2 \in [0.. |x|)$ with $j_1 < j_2$.
  Let $c_1$ and $c_2$ be the center positions of the longest palindromic suffixes of $x[0.. j_1]$ and $x[0.. j_2]$, respectively.
  If the longest palindromic suffix of $x[0.. j]$ is the empty string, we define its center as $j$.
  For the sake of contradiction, assume $c_1 > c_2$.
  On the one hand, $x[2c_1-j_1-1.. j_1]$ is the longest palindromic suffix of $x[0.. j_1]$ by definition.
  On the other hand, $x[2c_2-j_1-1.. j_1]$ is also a palindromic suffix of $x[0.. j_1]$ centered at $c_2$
  since there is a longer palindromic factor $x[2c_2-j_2-1.. j_2]$ centered at $c_2$.
  Then, it holds that $|x[2c_1-j_1-1.. j_1]| = 2(j_1-c_1)+2 < 2(j_1-c_2)+2 = |x[2c_2-j_1-1.. j_1]|$, a contradiction.
  Thus, the center positions of the longest palindromic suffixes of prefixes of $x$ form a non-decreasing sequence of length $|x|$ from $\calC_x$.
  Such a sequence can be encoded using at most $|\calC_x| + (|x|-1) = 3|x|-2$ bits by representing the differences between adjacent values in unary.
  Therefore, $\LPal_x$ can be encoded using $3|x|-2$ bits, and the same holds for $\MPal_x$.
\end{proof}
We provide examples of the encoding in Fig.~\ref{fig:encode}.
Note that the compact encoding in Lemma~\ref{lem:encode} does \emph{not} support constant-time access to each element of $\MPal$.
\begin{figure}[tb]
  \centering
  \includegraphics[width=\linewidth]{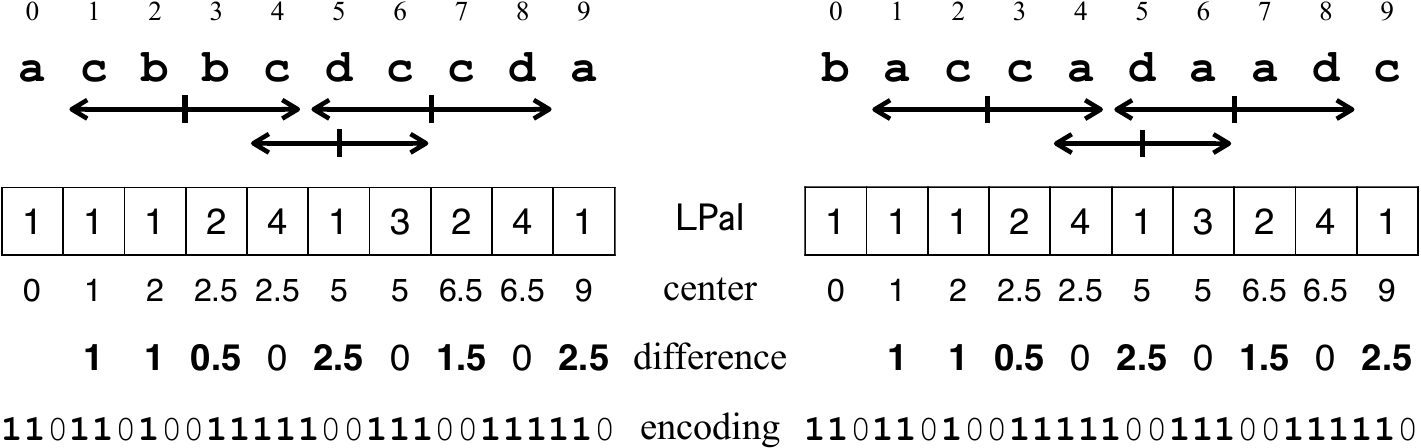}
  \caption{Encodings of two strings $\mathtt{acbbcdccda}$ and $\mathtt{baccadaadc}$.
    Double-headed arrows indicate maximal palindromes that are longer than one.
    For each string, $\LPal$ array, its corresponding centers,
    the differences between adjacent centers, and the binary encoding are shown.
    This encoding represents each difference $d$ in unary:
    specifically, $d$ is encoded as $2d$ ones followed by a single zero.
  }\label{fig:encode}
\end{figure}

\section{\texorpdfstring{$O(n)$}{O(n)}-bit representation of maximal palindromes}
In this section, we show our main theorem:

\begin{restatable}{theorem}{maintheorem}\label{thm:main}
  Given a string $w$ of length $n$ and a positive constant $\epsilon < 1$,
  we can construct in $O(n)$ time and $O(n)$-bit working space
  a data structure of size $3(1+\epsilon)n + o(n)$ bits
  that can return the length of the maximal palindrome centered at $c$ in $O(1)$ time for a given $c \in \calC_w$.
\end{restatable}

In Section~\ref{sec:sketch}, we provide a sketch of our method to introduce the core intuition first.
In the subsequent three subsections, we describe the details of our compact data structure and how to construct it given the $\MPal$ array of the input string.
Finally, in Section~\ref{sec:construct}, we show how to realize the construction using only $O(n)$ bits of working space.

\subsection{Sketch of our method}\label{sec:sketch}

Let $\tau_1$ and $\tau_2$ be integer parameters such that $1 \le \tau_1 < \tau_2$ and $\tau_2 \in O(\log^2 n)$.
We categorize palindromes with respect to their lengths as follows:
a palindrome $P$ is said to be
(1) \emph{short} if $|P| \le 2\tau_1$,
(2) \emph{medium} if $2\tau_1 < |P| \le 2\tau_2$, and
(3) \emph{long} if $|P| > 2\tau_2$, respectively.
Our compact data structure consists of three main components:
$\calD_{\mathrm{long}}$ for long maximal palindromes,
$\calD_{\mathrm{medium}}$ for medium maximal palindromes, and
$\calD_{\mathrm{short}}$ for short maximal palindromes.
Given a center position $c$, we query each component until the length of the MPal at $c$ is obtained.

The idea behind the first two data structures $\calD_{\mathrm{long}}$ and $\calD_{\mathrm{medium}}$ is
the \emph{grouping} of palindromes by using the periodicity of palindromes whose centers are close to each other.
For each parameter $\tau \in \{\tau_1, \tau_2\}$, array $\MPal$ is decomposed into $O(n/\tau)$ blocks of size $\tau$.
Roughly speaking, long/medium palindromes whose centers lie within the same block must be periodic, except for at most one possible exception,
and can be represented compactly using a constant number of arithmetic progressions.
Then, every long/medium maximal palindrome can be restored in constant time, from the arithmetic progressions, upon a query.
The size of $\calD_{\mathrm{long}}$ is $O(\frac{n}{\tau_2}\log n)$ bits
since each block has $O(\log n)$ bits of information.
Furthermore, the size of $\calD_{\mathrm{medium}}$ is $O(\frac{n}{\tau_1}\log \tau_2)$ bits
since a medium palindrome is shorter than $\tau_2$ and each block has $O(\log \tau_2)$ bits of information.
Later, we will choose $\tau_1$ and $\tau_2$ so that $\tau_1 \in \Theta(\log n)$ and $\tau_2 \in \Theta(\log^2 n)$.
Then, the total size of $\calD_{\mathrm{long}}$ and $\calD_{\mathrm{medium}}$ is
$O(\frac{n}{\tau_2}\log n + \frac{n}{\tau_1}\log \tau_2) = O(\frac{n}{\log^2 n}\log n + \frac{n}{\log n}\log(\log^2 n)) \in o(n)$ bits.

The third data structure $\calD_{\mathrm{short}}$ is based on a \emph{lookup table}.
For simplicity, we consider the case where $\epsilon = 0.5$ in this paragraph.
We consider length-$6\tau_1$ factors of $w$ that start at $4k\tau_1$ for $k = 0, 1, 2, \ldots$.
We refer to such factors as \emph{windows}.
By definition, the windows cover the whole string $w$ and any adjacent windows overlap by $2\tau_1$ characters, i.e.,
every short maximal palindrome is a factor of some window.
Given a center position $c \in \calC_w$, we detect a window that contains the (short) maximal palindrome centered at $c$.
We then obtain $\MPal[2c]$ by a single access to a precomputed lookup table $T$.
Since the total number of length-$6\tau_1$ strings over $\Sigma$ is $\sigma^{6\tau_1}$,
a na\"ive implementation of table $T$ leads to an $\Omega(\sigma^{6\tau_1})$-bit data structure,
which may be superlinear unless $\tau_1$ or $\sigma$ is sufficiently small.
However, using a compact representation of $\MPal$ (Lemma~\ref{lem:encode}),
the dependency on $\sigma$ can be eliminated.
Finally, the size of table $T$ can be reduced to
$O(\alpha^{\tau_1}\tau_1^2)$ bits where $\alpha$ is a \emph{constant} that is independent of $\sigma$ and $\tau_1$.
By choosing $\tau_1 = (\log_\alpha n)/\beta \in \Theta(\log n)$ for some $\beta > 1$, the size of $T$ becomes $o(n)$ bits.
Furthermore, by Lemma~\ref{lem:encode}, each window can be encoded using $3(6\tau_1)-2 < 18\tau_1$ bits.
Since there are at most $\lceil \frac{n}{4\tau_1}\rceil$ windows, the total size of the encoding is at most $4.5n + 18\tau_1$ bits.
Therefore, the total size of $\calD_{\mathrm{short}}$ is $4.5n + o(n) = 3(1+\epsilon)n + o(n)$ bits when $\epsilon = 0.5$.

\subsection{Data structure for long palindromes}\label{sec:long}

We separate interval $[0.. n)$ into $O(n/\tau_2)$ sub-intervals
where each sub-interval is of length $\tau_2$.
We refer to such a sub-interval as a \emph{large block}.
Let us focus on $k$-th large block, denoted by $B_k$.
Let $L_k$ be the set of long maximal palindromes
whose centers lie within the $k$-th large block $B_k$.
If $|L_k| \le 2$, then we simply store
their lengths using $O(\log n)$ bits.
Otherwise, we use the periodicity of palindromes whose centers are close to each other,
which is summarized as the following lemma:
\begin{restatable}{lemma}{lemthree}\label{lem:palperiod}
  If $|L_k| \ge 3$, then
  the sequence of lengths of maximal palindromes in $L_k$ sorted by their center positions
  can be represented by at most two arithmetic progressions and at most one integer. 
  Also, the center positions from $L_k$ can be represented by a single arithmetic progression.
\end{restatable}
The essence of this lemma is the same as that of Lemma 12 in~\cite{MatsubaraIISNH09},
which follows from classical results shown by Apostolico, Breslauer, and Galil~\cite{ApostolicoBG95}.
A concrete example of Lemma~\ref{lem:palperiod} is provided in Fig.~\ref{fig:palperiod}.
For completeness, we give a proof of Lemma~\ref{lem:palperiod} in Appendix~\ref{sec:apppendix}.
\begin{figure}[tb]
  \centering
  \includegraphics[width=\linewidth]{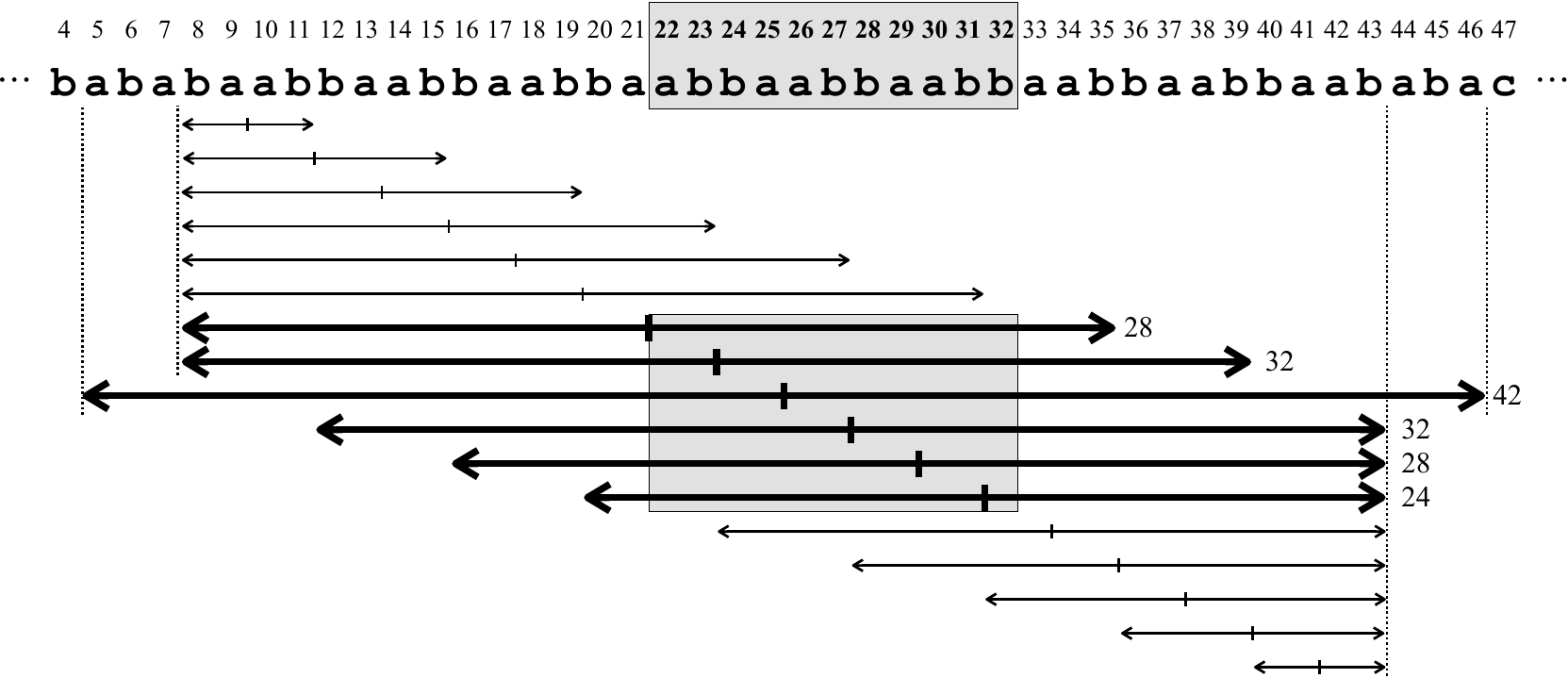}
  \caption{Illustration of a part of an input string and maximal palindromes within it.
    Consider the large block $B = [22.. 32]$ of size $\tau_2 = 11$.
    Maximal palindromes whose centers lie within $B$ are shown with bold arrows,
    and their lengths are labeled to the right.
    The sequence of these lengths is $(28, 32, 42, 32, 28, 24)$,
    which can be described using two arithmetic progressions, $(28, 32)$ and $(32, 28, 24)$,
    along with the single integer $42$.
    Also, the centers of the long palindromes within $B$ are evenly spaced.
  }\label{fig:palperiod}
\end{figure}

By Lemma~\ref{lem:palperiod}, even if $L_k$ is large,
it can be represented in $O(\log n)$ bits and
each element of $L_k$ can be retrieved
by a constant number of arithmetic operations.
Finally, let us consider the construction of the data structure.
Given a string $w$ of length $n$, we first compute the array $\MPal_w$.
For each large block, we scan the corresponding sub-array of $\MPal_w$
and extract the lengths of long maximal palindromes within that sub-array.
By Lemma~\ref{lem:palperiod}, they can be represented by at most three arithmetic progressions.
Clearly, such arithmetic progressions can be computed in $O(\tau_2)$ time for each large block.
Thus, in total, the data structure of Lemma~\ref{lem:long} can be constructed in $O(n)$ time.

From the above discussion, we obtain the following:
\begin{lemma}\label{lem:long}
  There is a data structure of size $O(\frac{n}{\tau_2}\log n)$ bits
  that can return $\MPal[2c]$ in constant time
  if $\MPal[2c] > 2\tau_2$.
  The data structure can be constructed in $O(n)$ time using $O(n \log n)$ bits of working space.
\end{lemma}

\subsection{Data structure for medium palindromes}\label{sec:med}

We separate interval $[0.. n)$ into $O(n/\tau_1)$ sub-intervals
where each sub-interval is of length $\tau_1$.
We refer to such a sub-interval as a \emph{small block}.
The structure of $\calD_{\mathrm{medium}}$ is similar to that of $\calD_{\mathrm{long}}$.
The differences are
(1) the parameter $\tau_2$ in $\calD_{\mathrm{long}}$ is replaced by $\tau_1$, and
(2) each arithmetic progression is represented in $O(\log \tau_2)$ bits, since
the size of a small block is $\tau_1 < \tau_2$ and the length of a medium palindrome is at most $2\tau_2$.

Thus, similar to Lemma~\ref{lem:long}, we obtain the following:
\begin{lemma}\label{lem:med}
  There is a data structure of size $O(\frac{n}{\tau_1}\log \tau_2)$ bits
  that can return $\MPal[2c]$ in constant time
  if $2\tau_1 < \MPal[2c] \le 2\tau_2$.
  The data structure can be constructed in $O(n)$ time using $O(n \log n)$ bits of working space.
\end{lemma}

If we choose $\tau_2 \in \Theta(\log^2 n)$ and $\tau_1 \in \Theta(\log n)$, and
na\"ively store the length of every short maximal palindrome,
which can be represented in $O(\log\tau_1) = O(\log\log n)$ bits,
we obtain a sub-linear $O(n \log\log n/\log n)$-word representation of $\MPal$:
\begin{corollary}\label{cor:sublinear}
  There is a data structure of $O(n \log \log n)$ bits
  that can return $\MPal[2c]$ in $O(1)$ time for a given $c \in \calC_w$.
  The data structure can be constructed in $O(n)$ time using $O(n \log n)$ bits of working space.
\end{corollary}

\subsection{Data structure for short palindromes}\label{sec:short}

In this subsection, we show the following lemma to reduce
the size of the data structure of Corollary~\ref{cor:sublinear} to $O(n)$ bits.
\begin{lemma}\label{lem:short}
  We set $\tau_1 = \lfloor \epsilon \log n/12\rfloor$ for a given positive constant $\epsilon < 1$.
  There is a data structure of size $3(1+\epsilon)n + o(n)$ bits
  that can return $\MPal[2c]$ in constant time
  if $\MPal[2c] \le 2\tau_1$.
  The data structure can be constructed in $O(n)$ time using $O(n \log n)$ bits of working space.
\end{lemma}
\begin{proof}
  Let $\delta = 2/\epsilon > 2$ be a constant.
  For each $k \in [0.. \frac{n}{\delta\tau_1})$, let $X_{k}= w[k\delta\tau_1.. k\delta\tau_1+(2+\delta)\tau_1)$
  be the length-$(2+\delta)\tau_1$ factor of $w$ that starts at position $k\delta\tau_1$.
  We refer to $X_{k}$ as the $k$-th \emph{window}.
  By definition, $X_k$ and $X_{k+1}$ overlap by $2\tau_1$ characters, and thus,
  every short maximal palindrome is a factor of some window.

  By Lemma~\ref{lem:encode}, array $\MPal_x$ of a string $x$ can be encoded using $3|x|-2$ bits.
  We denote by $\enc(X)$ the $(3|X|-2)$-bit encoding of window $X$.
  Let $T$ be a lookup table (empty initially).
  For every window $X$ in $w$, we add records $\langle (\enc(X), p), \ell\rangle$
  to the lookup table $T$ if they do not already exist, where $p \in \calC_X$ is a center position of $X$
  and $\ell$ is the length of the MPal at $p$ in $X$. We only add records corresponding to short palindromes.
  The lookup table requires
  $O(2^{3(2+\delta)\tau_1-2}\cdot(2+\delta)\tau_1\cdot\log((2+\delta)\tau_1)) = O(2^{3(2+\frac{2}{\epsilon})\tau_1}\tau_1 \log \tau_1)$ bits
  since there are at most $2^{3|X|-2} = 2^{3(2+\delta)\tau_1-2}$ possible encodings $\enc(X)$ for strings $X$ of length $(2+\delta)\tau_1$,
  and each value $\ell$ requires $O(\log(2+\delta)\tau_1) = O(\log \tau_1)$ bits.
  If we set $\tau_1 = \lfloor \epsilon\log n/12\rfloor$, then 
  the size of the table is $O(2^{\frac{(1+\epsilon)\log n}{2}}\log n \log\log n) = O(n^{\frac{1+\epsilon}{2}}\log n \log \log n)$ bits,
  which is $o(n)$ bits since $\epsilon < 1$.
  Further, each window can be represented in $3(2+\delta)\tau_1 \le \frac{1+\epsilon}{2}\log n$ bits, fitting in a constant number of machine words.
  Since there are $\lceil \frac{n}{\delta\tau_1} \rceil \sim 6n/\log n$ windows,
  the total space complexity is $3(1+\epsilon)n + o(n)$ bits of space.

  Given a center position $c$ of a short maximal palindrome as a query,
  we detect a window $X$ that must contain that palindrome.
  Then, we look up the record from $T$
  that corresponds to $(\enc(X), c-\mathsf{beg}(X))$
  where $\mathsf{beg}(X)$ is the beginning position of $X$ in $w$.
  Namely, $c-\mathsf{beg}(X)$ denotes the offset of the center position $c$
  from the beginning position of $X$.
  If no such an element exists, the MPal at $c$ is long or medium, so we terminate the procedure.
  Otherwise, we obtain the correct value of $\MPal[2c]$.

  Finally, let us consider the construction of the lookup tables.
  For each window $Z$ of the input string $w$,
  we perform Manacher's algorithm on it.
  Then we can obtain $\MPal_{Z}$ and $\LPal_{Z}$,
  as well as its encoding $\enc(Z)$ in $O(|Z|)$ time~(see also the proof of Lemma~\ref{lem:encode}).
  We then insert the lengths of (short) maximal palindromes from $\MPal_{Z}$ into the lookup table $T$.
  Recall that every record of the form $\langle(\enc(Z), p), L\rangle$ fits within $O(1)$ machine words.
  The total construction time is $O((2+\delta)\tau_1 \times \frac{n}{\delta\tau_1}) = O(n)$.
\end{proof}

Corollary~\ref{cor:main} immediately follows from Lemmas~\ref{lem:long}, \ref{lem:med}, and~\ref{lem:short}.
\begin{corollary}\label{cor:main}
  Given a string $w$ of length $n$ and a positive constant $\epsilon < 1$,
  we can construct in $O(n)$ time and $O(n \log n)$-bit working space
  a data structure of size $3(1+\epsilon)n + o(n)$ bits
  that can return the length of the maximal palindrome centered at $c$ in $O(1)$ time for a given $c \in \calC_w$.
\end{corollary}

Now, the only remaining task to prove Theorem~\ref{thm:main} is to make the working space compact.

\subsection{Construction of the data structure using $O(n)$-bit working space}\label{sec:construct}

In the proof of Lemmas~\ref{lem:long}, \ref{lem:med}, and~\ref{lem:short},
we have shown how to construct the data structure in linear time, but without further bounds on the working space. In particular, at the beginning we compute the array $\MPal$, which requires $O(n \log n)$ bits.
Below, we slightly modify the construction algorithms to complete the proof of Theorem~\ref{thm:main}.

The original version of \emph{Manacher's algorithm} computes the array $\MPal$ from left to right and stores it explicitly: while calculating $\MPal[2c]$, the algorithm might access $\MPal[2c']$ with $c' < c$. For additional details on the algorithm, see~\cite{Manacher75}.

Our goal is to run a modified version of \emph{Manacher's algorithm}, without storing $\MPal$ explicitly. In order for the algorithm to work, when we are calculating $\MPal[2c]$, we need to be able to access $\MPal[2c']$ with $c' < c$ in constant time. This is possible by constructing the data structure used in Theorem~\ref{thm:main} block-wise, while running \emph{Manacher's algorithm}.

Let us consider the data structure for long palindromes. Recall that, in that data structure, we are separating interval $[0.. n)$ into $O(n/\tau_2)$ sub-intervals where each sub-interval is of length $\tau_2$. Let $[l, r]$ be the current sub-interval, that is, the one which contains the center $c$ we are processing. Our idea is to insert the sub-intervals one by one into the data structure: however, when processing center $c$, the calculation of $\MPal[2c']$ with $c'$ in $[l, c)$ is not supported yet by the data structure, because the sub-interval $[l, r]$ has not been inserted yet. Thus, we maintain an array $W_{\text{long}}$ which contains all $\MPal[2c']$ with $c'$ in $[l, c)$. Then, given $c'$, we can find $\MPal[2c']$ in constant time by checking the position $(c'-l)/2$ of the array $W_{\text{long}}$. After calculating $\MPal[2c]$:

\begin{itemize}
  \item if $c \neq r$, we insert $\MPal[2c]$ into $W_{\text{long}}$;
  \item if $c = r$, we insert the sub-interval $[l, r]$ into the data structure, and we clear $W_{\text{long}}$.
\end{itemize}

We handle medium palindromes similarly: only the endpoints of the sub-intervals change, and we can update an array $W_{\text{medium}}$ in the same way as $W_{\text{long}}$. About short palindromes, when we finish processing a window, we can insert it into the data structure; the centers $c$ such that the window contains the longest palindrome centered at $c$ can be erased from $W_{\text{short}}$.

Compared to the algorithm in Theorem~\ref{thm:main}, the bottleneck in the additional space required is the array $W_{\text{long}}$, which stores $O(\log^2 n)$ integers using $O(\log n)$ bits each, so we need $o(n)$-bit additional space. Thus, we use $O(n)$-bit working space during the whole algorithm.

We have completed the proof of our main theorem, restated below:
\maintheorem*

Recall that the factor $3$ in the space complexity is due to the encoding shown in Lemma~\ref{lem:encode}.
As mentioned in Section~\ref{sec:intro}, if a more compact encoding method is invented, the size of our data structure will be reduced accordingly.

\section{Application to internal longest palindrome queries}
In this section, we propose an $O(n)$-bit data structure for solving the following problem.
\begin{definition}[Internal longest palindrome query]\label{def:internalquery}
  For a string $w$ of length $n$,
  the internal longest palindrome query is,
  given an interval $[i.. j] \subseteq [0.. n)$ as a query,
  to return the length of a longest palindromic factor appearing in $w[i.. j]$.
\end{definition}

This problem can be solved in $O(1)$ time per query, using $O(n \log n)$ bits of space~\cite{MitaniMSH23}.
In the following, we propose two alternative approaches 
to answer each query in $O(\log n)$ time using only $O(n)$ bits of space.

\subsection{Approach 1: Based on framework of~\cite{MitaniMSH23}}
The first approach is based on the framework of Mitani et al.'s method~\cite{MitaniMSH23}.
They showed that
the length of a longest palindromic factor of $w[i.. j]$ is
the maximum among the following:
\begin{enumerate}
  \item The length $\ell_p$ of the longest palindromic prefix of $w[i.. j]$,
  \item the length $\ell_s$ of the longest palindromic suffix of $w[i.. j]$, and
  \item the length $\ell_m$ of a longest maximal palindrome of $w$ that is contained within $w[i+1.. j-1]$.
\end{enumerate}
Let $c_p$ (resp.~$c_s$) be the center position of the longest palindromic prefix (resp.~suffix) of $w[i.. j]$.
Namely, $c_p = i + \ell_p/2$ and $c_s = j - \ell_s/2 + 1$ hold.
Once we obtain $c_p$ and $c_s$,
the length $\ell_m$ can be computed by using \emph{range maximum queries} (RMQ) on array $\MPal$.
Precisely, $\ell_m = \MPal[\RMQ_{\MPal}(2c_p+1, 2c_s-1)]$ holds,
where $\RMQ_A(s, t)$ is an element of $\operatorname{argmax}\{A[k]\mid k \in[s.. t]\}$
for array $A$ and its indices $s, t$~(the correctness is shown in~\cite{MitaniMSH23}).
It is known that there is an $O(n)$-bit data structure that can answer an RMQ in constant time~\cite{FischerH11}.
Thus, the remaining task is to show how to compute $\ell_s$ efficiently using an $O(n)$-bit data structure,
since the computation of $\ell_p$ can be treated symmetrically.
The following lemma addresses this:
\begin{lemma}
  There is a data structure of size $O(n)$ bits
  that can compute the length $\ell_s$ of the longest palindromic suffix of $w[i.. j]$
  in $O(\log n)$ time for any given range $[i.. j]$.
\end{lemma}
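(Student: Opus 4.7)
The plan is to reduce this query to a leftmost-over-threshold search on a derived array and to answer that search via an $O(n)$-bit RMQ on top of the structure of Theorem~\ref{thm:main}. First I would characterize the longest (even-)palindromic suffix of $w[i..j]$ by its center. Such a suffix equals $w[2c - j - 1 .. j]$ of length $2(j - c + 1)$, and it is valid iff $c \ge \lceil(i+j+1)/2\rceil$ (so that it fits in $[i..j]$) and $\MEPal[c] \ge 2(j - c + 1)$ (so that the MPal at $c$ actually reaches $j$). Maximizing $\ell_s$ is therefore equivalent to minimizing $c$ subject to these two constraints. Setting $g(c) := 2c + \MEPal[c]$, the second constraint becomes $g(c) \ge 2j + 2$, and the task reduces to finding the smallest $c^* \in [\lceil(i+j+1)/2\rceil, j]$ with $g(c^*) \ge 2j + 2$, from which $\ell_s = 2(j - c^* + 1)$ (or $\ell_s = 0$ if no such $c^*$ exists).

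The data structure I would build layers two succinct components on top of the input. The first is the $O(n)$-bit structure of Theorem~\ref{thm:main}, which yields $\MEPal[c]$, and hence $g(c)$, in $O(1)$ time. The second is the Fischer--Heun succinct RMQ~\cite{FischerH11} built over the \emph{implicit} array $g$: this uses $O(n)$ bits and supports range-maximum queries in $O(1)$ time, needing only $O(1)$-time access to $g$, which is available. The array $g$ is never stored explicitly; its entries are generated on demand during the linear-time RMQ construction as well as at query time. The total space remains $O(n)$ bits.

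To answer a query I would then perform a standard RMQ-guided binary search for $c^*$: maintain an interval $[\ell, r]$ initialized to $[\lceil(i+j+1)/2\rceil, j]$; at each step compute $m = \lfloor(\ell + r)/2\rfloor$ and a single RMQ of $g$ over $[\ell, m]$; set $r \leftarrow m$ if the maximum reaches $2j + 2$, and $\ell \leftarrow m + 1$ otherwise. After $O(\log n)$ iterations $\ell = r$, giving $c^*$ (or, by one final $g$-evaluation, detecting its absence). Each iteration costs $O(1)$, so the overall query time is $O(\log n)$.

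The point that needs the most care is verifying that Fischer--Heun's RMQ works correctly over an implicitly represented array; this reduces to observing that its query routine probes $O(1)$ values of the underlying array per query, and every such probe is answered in $O(1)$ via Theorem~\ref{thm:main}. The remaining issues---the parity of $i + j$ in the lower endpoint $\lceil(i+j+1)/2\rceil$, the base case where no valid center exists (so $\ell_s = 0$), and the symmetric treatment of $\ell_p$ announced in the surrounding text---are routine.
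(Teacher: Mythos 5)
Your proposal is correct and follows essentially the same route as the paper: your array $g(c) = 2c + \MEPal[c]$ is just an affine transformation of the paper's array $\mathsf{E}[c] = c + \MEPal[c]/2 - 1$ of ending positions (indeed $g(c) = 2\mathsf{E}[c] + 2$, so $g(c) \ge 2j+2$ iff $\mathsf{E}[c] \ge j$), and both arguments find the leftmost admissible center by an RMQ-guided binary search over the right half of $[i..j]$, with the array represented implicitly through Theorem~\ref{thm:main} and a Fischer--Heun $O(n)$-bit RMQ on top.
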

\begin{proof}
  We first present an $O(n \log n)$-bit data structure, and then we make it compact.
  Let $\mathsf{E}$ be the array of length $2n-1$ such that $\mathsf{E}[2c] = c$ if the MPal at $c$ is empty;
  and otherwise, it stores the ending position of the MPal at $c$.
  We search for the center $c_s$ of the longest palindromic suffix of $w[i.. j]$ by using binary search on $\mathsf{E}$.
  Let $m = (i+j)/2$ be the middle position of the range $[i.. j]$.
  Note that $m \le c_s \le j$ holds by the definition of $c_s$.
  The search range $[m, j]$ exhibits the following monotonicity:
  \begin{itemize}
    \item $\mathsf{E}[\RMQ_\mathsf{E}(2m, 2p)] < j$   holds for every $p \in \{m, m+0.5, \ldots, c_s-0.5\}$, and
    \item $\mathsf{E}[\RMQ_\mathsf{E}(2m, 2q)] \ge j$ holds for every $q \in \{c_s, c_s+0.5, \ldots, j\}$.
  \end{itemize}
  Thus, we can determine $c_s$ in $O(\log n)$ time
  by conducting a binary search within the range $[2m.. 2j]$ over $\mathsf{E}$, making use of an RMQ at each step.
  The desired length $\ell_s$ can be immediately obtained from $c_s$ and $j$.

  Finally, we make the data structure compact.
  As mentioned before, an $O(n)$-bit data structure for RMQs on $\mathsf{E}$ can be constructed, allowing $O(1)$-time queries.
  Moreover, each element of $\mathsf{E}$ can be derived from $\MPal$ as $\mathsf{E}[2c] = c + \MPal[2c]/2$.
  Therefore, by Theorem~\ref{thm:main}, array $\mathsf{E}$ can be represented in $O(n)$ bits.
\end{proof}
In summary, we obtain the following:

\begin{restatable}{theorem}{ILPquery}\label{thm:internal}
  There is a data structure of size $O(n)$ bits
  that can answer each internal longest palindrome query
  in $O(\log n)$ time.
\end{restatable}

\subsection{Approach 2: Based on direct binary search}
In the second approach,
we perform a binary search on the answer. In particular, we want to know whether the answer is greater than or equal to an integer $k$. We can use the following lemma.

\begin{lemma}
  There is a data structure of size $O(n)$ bits
  that can determine in constant time whether the answer to the longest palindrome query for the interval $[i.. j]$ is greater than or equal to $k$, for any given range $[i.. j]$ and integer $k$.
\end{lemma}

\begin{proof}
  Given a center $c$, we can determine the length $f(c)$ of the longest palindromic factor appearing in $w[i.. j]$ and centered at $c$, using the array $\MPal$. Specifically, $f(c) = \min(\MPal[2c], 2(c-i)+1, 2(j-c)+1)$. 

  Now we want to determine whether there exists $c$ in $[i, j]$ with $f(c) \geq k$, that is, whether there exists $c$ in $[i, j]$ such that all the following three conditions are true:
  \begin{enumerate}
    \item $\MPal[2c] \geq k$;
    \item $2(c-i)+1 \geq k$;
    \item $2(j-c)+1 \geq k$.
  \end{enumerate}

  This is equivalent to finding $c$ in $[i+(k-1)/2, j-(k-1)/2]$ with $\MPal[2c] \geq k$. 

  This can be checked by using \emph{range maximum queries} (RMQ) on the array $\MPal$.
  Precisely, we check whether $\MPal[\RMQ_{\MPal}(2i+k-1, 2j-k+1)] \geq k$ holds,
  where $\RMQ_A(s, t)$ is an element of $\operatorname{argmax}\{A[k]\mid k \in[s.. t]\}$
  for array $A$ and its indices $s, t$.

  By Theorem~\ref{thm:main}, array $\MPal$ can be represented in $O(n)$ bits. Moreover, it is known that there is an $O(n)$-bit data structure that can answer an RMQ in constant time~\cite{FischerH11}. So we are using $O(n)$ bits of memory in total.
\end{proof}

Thus, we can determine the answer to an internal longest palindrome query in $O(\log n)$ time
by conducting a binary search within the range $[1.. j-i+1]$ over $k$, making use of an RMQ at each step.
This completes an alternative proof of Theorem~\ref{thm:internal}.

\section{Conclusions}\label{sec:conc}
In this paper, we proposed a data structure of size $3(1+\epsilon)n + o(n)$ bits
that can return the length of the maximal palindrome centered at
a given position in constant time
for a string of length $n$.
As an application of this compact representation,
we presented data structures of size $O(n)$ bits
that can answer any internal longest palindrome query
in $O(\log n)$ time.

Our future work includes the following:
\begin{itemize}
  \item
    Can we develop a \emph{truly} succinct data structure?
    Namely, does there exist an $(L_n + o(n))$-bit representation of $\MPal$ that supports constant-time access, where $L_n$ denotes the information-theoretic lower bound for encoding the maximal palindromes in a string of length $n$?
  \item Can we construct our data structures in $O(n/\log_{\sigma} n)$ time 
    when the alphabet size $\sigma$ is small?
    An $O(n/\log_{\sigma} n)$-time algorithm to compute the \emph{longest} maximal palindrome is known~\cite{Charalampopoulos22}.
  \item Can we implement our data structure so that it outperforms the simple array representation of $\MPal$ in practice?
\end{itemize}

\section*{Acknowledgments}
The authors would like to thank Prof. Nadia Pisanti for helpful discussions.
This work was supported in part by JSPS KAKENHI Grant Numbers JP24K20734 (TM) and JP24K02899 (TI).

\appendix
\section{Proof of Lemma~\ref{lem:palperiod}}\label{sec:apppendix}
In this section, we focus only on palindromes that have \emph{even} lengths.
As discussed in Section 3 of~\cite{ApostolicoBG95},
if we are interested in palindromes that have odd lengths,
we can convert the input string $w$ into
$\hat{w} = w[0]w[0]w[1]w[1]\cdots w[n-1]w[n-1]$,
which is obtained by doubling each character of $w$.
Then, for any $i, j$ with $i, j \in [0.. n)$,
$w[i.. j]$ is a (possibly odd-length) palindrome if and only if $\hat{w}[2i.. 2j+1]$ is an even-length palindrome.
Namely, detecting palindromes in a string of length $n$ can be reduced to detecting \emph{even-length} palindromes in a string of length $2n$.
In the rest of this section,
we simply refer to ``even-length palindromes'' as ``palindromes'' unless otherwise specified.
Furthermore, we  redefine the \emph{center} $c$ of a palindromic factor $w[i.. j]$
as the integer position $c = (i+j)/2 + 0.5$,
which is the first position of the right half of the interval $[i.. j]$.

We next state some known results, which will be used in the proof of Lemma~\ref{lem:palperiod}.
We say that the \emph{radius} of a palindrome is $r$ if the length of the palindrome is $2r$.
We denote by $\gcd(p, q)$ the greatest common divisor of integers $p$ and $q$.

\begin{lemma}[Periodicity Lemma~\cite{fine1965uniqueness}]\label{lem:plemma}
  If a string $w$ has two periods $p$ and $q$ with $p+q-\gcd(p, q) \le |w|$,
  then $w$ also has period $\gcd(p, q)$.
\end{lemma}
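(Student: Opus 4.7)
The plan is to prove the periodicity lemma by strong induction on $p + q$, mirroring the Euclidean algorithm. Write $d = \gcd(p, q)$ and assume without loss of generality that $p \ge q > 0$. If $q \mid p$ then $d = q$ is already a period of $w$ by hypothesis, which handles the base case; in particular, the case $p = q$ is trivial. Otherwise, assume $p > q$ and $q \nmid p$.

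In the inductive step I would first establish the sub-claim that $w$ has period $p - q$, and then apply the induction hypothesis to the pair $(p - q,\, q)$. This pair has $\gcd(p - q, q) = d$, strictly smaller sum $p < p + q$, and satisfies $(p - q) + q - d = p - d \le p + q - d \le |w|$, so the induction yields that $w$ has period $d$, completing the proof.

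To prove the sub-claim, I fix any $i$ with $0 \le i \le |w| - 1 - (p - q)$ and seek to conclude $w[i] = w[i + p - q]$ by chaining the two known periods. In the regime $i + p \le |w| - 1$, this follows from $w[i] = w[i + p] = w[i + p - q]$ by applying period $p$ forward and then period $q$ backward. In the regime $i \ge q$, it follows from $w[i] = w[i - q] = w[i - q + p] = w[i + p - q]$ by applying period $q$ backward and then period $p$ forward. These two regimes jointly cover every valid $i$ exactly when $|w| \ge p + q$.

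The main obstacle is the residual window $i \in [|w| - p,\, q - 1]$, which is nonempty precisely when $p + q - d \le |w| < p + q$ and then contains at most $p + q - |w| \le d$ indices. For such boundary $i$, neither two-step chain fits inside $[0, |w|)$, so I would extend the chain into a longer Bezout-style walk: since $d = \gcd(p, q)$, there exist non-negative integers $a \le q/d$ and $b < p/d$ with $ap - bq = d$, giving a sequence of period-$p$ ascents followed by period-$q$ descents that connects $i$ to $i + p - q$ through intermediate positions inside $[0, |w|)$, with the in-bounds condition guaranteed by the length bound $|w| \ge p + q - d$. Verifying that some such walk stays in bounds for every boundary index, possibly by doing the mirror-image walk (ascending by $q$'s then descending by $p$'s) in the symmetric case, is the delicate combinatorial step; once it is handled, the sub-claim holds and the outer induction closes.
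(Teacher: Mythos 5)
The paper does not actually prove this lemma; it is quoted as the classical Fine--Wilf periodicity lemma with a citation to the original 1965 paper, so there is no in-paper argument to compare against. Judged on its own terms, your proposal has the right skeleton (Euclidean induction on $p+q$, reducing the pair $(p,q)$ to $(p-q,q)$), but it has a genuine gap at exactly the point where the lemma's content lives. Your sub-claim is that the \emph{whole} word $w$ acquires period $p-q$. The two chains $w[i]=w[i+p]=w[i+p-q]$ and $w[i]=w[i-q]=w[i-q+p]$ leave uncovered the window $i\in[|w|-p.. q-1]$, which is nonempty whenever $|w|<p+q$, i.e., in every tight instance of the hypothesis. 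You propose to close it with a ``Bezout-style walk,'' but (a) you never verify that such a walk stays inside $[0.. |w|)$ for the boundary indices --- that in-bounds verification is essentially the entire difficulty of Fine--Wilf, and the tightness of the lemma (there are words of length $p+q-\gcd(p,q)-1$ with periods $p$ and $q$ but not period $\gcd(p,q)$) shows it cannot be waved through; and (b) the net displacement of the walk you describe is $ap-bq=d$, not $p-q$, so even granting in-boundedness it would yield $w[i]=w[i+d]$ rather than the equality $w[i]=w[i+p-q]$ that your sub-claim requires. As written, the argument is incomplete.

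The standard repair keeps your induction but shrinks the word instead of fighting the boundary window: let $w'=w[0.. |w|-q-1]$ be the prefix of length $|w|-q$. For every $i\le|w'|-1-(p-q)=|w|-p-1$ the chain $w[i]=w[i+p]=w[i+p-q]$ stays entirely in bounds, so $w'$ has period $p-q$ with no exceptional indices; it also inherits period $q$ from $w$, and $|w'|=|w|-q\ge p-d=(p-q)+q-\gcd(p-q,q)$, so the induction hypothesis gives $w'$ period $d$. Since $d\mid q$ and $|w'|\ge p-d\ge q$, the first $q$ characters of $w$ have period $d$, and period $q$ of $w$ then propagates period $d$ to all of $w$. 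This avoids the boundary window entirely.
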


\begin{lemma}[Lemma 3.3 in~\cite{ApostolicoBG95}]\label{lem:abg33}
  Assume that there are two palindromic factors in string $w$,
  each with radius at least $r$, centered at $c_1$ and $c_2$, where $c_1 < c_2$ and $c_2-c_1 \le r$.
  Then the factor $w[c_1-r.. c_2+r)$ has period $2(c_2-c_1)$.
\end{lemma}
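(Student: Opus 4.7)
The plan is to view each even-palindrome as a reflection symmetry of the index set that preserves characters, and to exploit the fact that the composition of two reflections is a translation. Under this paper's convention that the center $c$ is the first position of the right half, the factor $w[c-r.. c+r-1]$ being an even-palindrome means $w[p] = w[2c-1-p]$ for every $p \in [c-r.. c+r-1]$. Without loss of generality, I would shrink each palindrome to have radius exactly $r$; this is harmless because the ``radius-$r$ core'' of a palindrome is itself a palindrome (it is precisely the set of inner mirror pairs).

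Set $d := c_2 - c_1$, so $0 < d \le r$. For any position $p$, applying first the reflection around $c_1$ and then the reflection around $c_2$ sends $p \mapsto 2c_1 - 1 - p \mapsto 2c_2 - 1 - (2c_1 - 1 - p) = p + 2d$. Thus $w[p] = w[p + 2d]$ whenever both intermediate reflections are legal, i.e.\ whenever $p$ lies in the range of the palindrome at $c_1$ and $q := 2c_1 - 1 - p$ lies in the range of the palindrome at $c_2$.

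The remaining bookkeeping step is to show that the set of legal $p$ is exactly $[c_1 - r,\ c_2 + r - 1 - 2d]$, which is the range required for $2d$ to be a period of $w[c_1-r.. c_2+r-1]$. The condition $q \in [c_2 - r,\ c_2 + r - 1]$ rewrites as $p \in [c_1 - d - r,\ c_1 - d + r - 1]$, and intersecting with $[c_1 - r,\ c_1 + r - 1]$, using $0 \le d \le r$, yields $[c_1 - r,\ c_1 - d + r - 1]$, which is the same interval as $[c_1 - r,\ c_2 + r - 1 - 2d]$.

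The main obstacle is not conceptual but notational: one has to be careful that, under the paper's ``right-half first position'' convention, the reflection around $c$ is $p \mapsto 2c - 1 - p$ rather than $p \mapsto 2c - p$, and the off-by-one this induces has to be tracked consistently through the intersection of the two reflection-valid ranges. Once that arithmetic is settled, the lemma follows directly from the composition-of-reflections identity above.
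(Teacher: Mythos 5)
The paper does not prove this lemma itself --- it is cited verbatim as Lemma~3.3 of Apostolico, Breslauer, and Galil --- so there is no in-paper proof to compare against. Your argument is correct and complete: the reflections $p \mapsto 2c_1-1-p$ and $q \mapsto 2c_2-1-q$ are the right symmetries under the paper's center convention, their composition is the translation by $2(c_2-c_1)$, and your interval arithmetic correctly identifies the set of positions where both reflections apply as exactly $[c_1-r.. c_2+r-1-2d]$, which is what the period definition $w[i]=w[i+2d]$ for $i\in[0..|u|-2d)$ requires for $u = w[c_1-r.. c_2+r-1]$ of length $2r+d$. This composition-of-reflections argument is the standard proof of this classical fact, so nothing further is needed.
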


\begin{lemma}[Lemma 3.4 in~\cite{ApostolicoBG95}]\label{lem:abg:34}
  Assume that the radius of the MPal at $\tilde{c}$ in $w$ is at least $r$.
  Let $w[e_L.. e_R]$ be the maximal factor of $w$ that contains $w[\tilde{c}-r.. \tilde{c}+r)$ and has period $2r$.
  Namely,
  \begin{itemize}
    \item $w[i] = w[i + 2r]$ for $i \in [e_L.. e_R-2r]$,
    \item $e_L = 0$ or $w[e_L-1] \ne w[e_L+2r-1]$, and
    \item $e_R = |w|-1$ or $w[e_R+1] \ne w[e_R-2r+1]$.
  \end{itemize}
  Then the radii of the maximal palindromes centered at $c = \tilde{c} + mr$ for $m \in \mathbb{Z}$
  such that $c \in [e_L.. e_R]$
  are given as follows:
  \begin{itemize}
    \item If $c - e_L < e_R - c + 1$, then the radius is $c-e_L$.
    \item If $c - e_L = e_R - c + 1$, then the radius is larger than or equal to $c - e_L$.
    \item If $c - e_L > e_R - c + 1$, then the radius is $e_R-c+1$.
  \end{itemize}
\end{lemma}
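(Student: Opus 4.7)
The plan is to reduce the entire statement to a single \emph{symmetry principle} that combines the palindrome at $\tilde{c}$ with the period $2r$ of $w[e_L.. e_R]$, namely that $w[a] = w[b]$ for all $a, b \in [e_L.. e_R]$ satisfying $a + b \equiv 2\tilde{c} - 1 \pmod{2r}$. To prove this, I would translate each of $a$ and $b$ by an appropriate multiple of $2r$ to positions $a_0, b_0 \in [\tilde{c}-r.. \tilde{c}+r-1]$; periodicity gives $w[a] = w[a_0]$ and $w[b] = w[b_0]$, and a range argument forces $a_0 + b_0 = 2\tilde{c}-1$ exactly (it is the unique value of $a_0 + b_0 \in [2\tilde{c}-2r.. 2\tilde{c}+2r-2]$ that is congruent to $2\tilde{c}-1$ modulo $2r$). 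The palindromic symmetry of $w[\tilde{c}-r.. \tilde{c}+r-1]$ then yields $w[a_0] = w[b_0]$.

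For the lower bound, fix $c = \tilde{c}+mr \in [e_L.. e_R]$ and let $s = \min(c-e_L, e_R-c+1)$. For every $i \in [1.. s]$, the positions $c-i$ and $c+i-1$ lie in $[e_L.. e_R]$ and sum to $2c-1 \equiv 2\tilde{c}-1 \pmod{2r}$, so the symmetry principle gives $w[c-i] = w[c+i-1]$. Thus the MPal at $c$ has radius at least $s$, which directly proves the equality case and the ``$\ge$'' direction of the other two cases.

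For the strict upper bound, assume without loss of generality $c - e_L < e_R - c + 1$ and, toward a contradiction, that the MPal at $c$ has radius at least $c - e_L + 1$. Then $e_L \ge 1$ and the extended palindrome at $c$ forces $w[e_L-1] = w[2c-e_L]$. Since $2c-e_L$ and $e_L + 2r - 1$ both lie in $[e_L.. e_R]$ and their sum $2c + 2r - 1$ is congruent to $2\tilde{c}-1$ modulo $2r$, the symmetry principle gives $w[2c-e_L] = w[e_L + 2r - 1]$. Chaining, $w[e_L-1] = w[e_L + 2r - 1]$, contradicting the defining condition $w[e_L-1] \ne w[e_L+2r-1]$ at the left boundary of $[e_L.. e_R]$. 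The case $e_L = 0$ is immediate because the palindrome cannot extend past position $-1$, and the case $c - e_L > e_R - c + 1$ is completely symmetric.

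The main obstacle lies in the symmetry principle itself: one has to verify carefully that the reductions $a \mapsto a_0$ and $b \mapsto b_0$ stay inside $[e_L.. e_R]$ so that periodicity can be iterated along the shift, and that the range argument forces the exact equality $a_0 + b_0 = 2\tilde{c}-1$ rather than merely a congruence. Once this principle is established, both directions of the lemma reduce to short position calculations.
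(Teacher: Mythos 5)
The paper does not prove this lemma at all: it is imported verbatim as Lemma~3.4 of Apostolico, Breslauer, and Galil~\cite{ApostolicoBG95}, so there is no in-paper proof to compare against. Your argument is a correct, self-contained derivation. I checked the key steps: the ``symmetry principle'' is sound, since the unique representative $a_0 \in [\tilde{c}-r.. \tilde{c}+r-1]$ of $a$ modulo $2r$ satisfies $w[a]=w[a_0]$ (each intermediate shift by $2r$ stays inside $[e_L.. e_R]$ because $a \ge e_L$ and $a_0 \le \tilde{c}+r-1 \le e_R$), and the window $[2\tilde{c}-2r.. 2\tilde{c}+2r-2]$ indeed contains exactly one integer congruent to $2\tilde{c}-1$ modulo $2r$, namely $2\tilde{c}-1$ itself, so the palindromic symmetry of $w[\tilde{c}-r.. \tilde{c}+r-1]$ applies. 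The lower bound $\min(c-e_L,\, e_R-c+1)$ and the boundary contradiction (chaining $w[e_L-1]=w[2c-e_L]=w[e_L+2r-1]$ against the maximality condition, with the positions $2c-e_L$ and $e_L+2r-1$ verified to lie in $[e_L.. e_R]$ and to sum to $2c+2r-1 \equiv 2\tilde{c}-1 \pmod{2r}$) both check out, as do the degenerate cases $e_L=0$ and $e_R=|w|-1$. This is essentially the classical argument from~\cite{ApostolicoBG95}, reorganized around a single congruence-based symmetry statement, which makes the case analysis cleaner; if the paper wished to be self-contained here, your write-up would serve.
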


\begin{lemma}\label{lem:overlappal}
  If $w[a.. b] = w[c.. d]$ are identical palindromic factors of $w$ with $a \le c \le b \le d$, then $w[a.. d]$ is a palindrome.
\end{lemma}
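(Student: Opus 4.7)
The plan is to prove that $x := w[a.. d]$ is a palindrome by combining one period and one palindromic symmetry. First I would observe that, since $w[a.. b]$ and $w[c.. d]$ are equal strings of equal length, we have $b - a = d - c$ and therefore $d - b = c - a =: p$; moreover $0 \le p \le b - a$ by the interleaving assumption $a \le c \le b \le d$. With this notation, $w[a.. b]$ is a prefix of $x$ occupying indices $[0, b-a]$, and $w[c.. d]$ is a suffix of $x$ occupying indices $[p, d-a]$, both of length $b-a+1$.

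Next I would show that $p$ is a period of $x$: for every $i \in [0, (d-a)-p]$, position $i$ lies inside the prefix $w[a.. b]$ of $x$ and position $i+p$ lies inside the suffix $w[c.. d]$ of $x$, so the equality $w[a.. b] = w[c.. d]$ forces $x[i] = x[i+p]$.

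Finally I would verify the palindromic identity $x[i] = x[(d-a)-i]$ for every $i \in [0, d-a]$. The key covering observation is that, because $p \le b - a$, the intervals $[0, b-a]$ and $[p, d-a]$ together cover $[0, d-a]$, so for each $i$ at least one of the mirrored positions $i$, $(d-a)-i$ lies in $[0, b-a]$; by the symmetry of the goal I may assume $i \in [0, b-a]$. The palindromic property of the prefix $w[a.. b]$ then yields $x[i] = x[(b-a)-i]$, and a single application of the period $p$ transports this to $x[(b-a)-i+p] = x[(d-a)-i]$, as required. The statement is a clean consequence of these two ingredients, so there is no genuine obstacle; the only bookkeeping point to double-check is that the period shift stays within $[0, d-a]$, which is immediate from $p \le b - a$.
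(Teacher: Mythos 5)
Your proof is correct and follows essentially the same route as the paper's: establish that $p = c-a = d-b$ is a period of $w[a..d]$, then combine the palindromic symmetry of $w[a..b]$ with one shift by $p$ to obtain $x[i] = x[(d-a)-i]$. Your covering observation (that $p \le b-a$ ensures every mirrored pair has a representative inside the prefix) makes explicit a range condition the paper leaves implicit, but the argument is the same.
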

\begin{proof}
  For any $i \in [0.. (a+d)/2]$,
  it must hold that $w[a+i] = w[d-i]$ as follows:
  since $w[a.. b]$ is a palindrome, 
  $w[a+i] = w[b-i]$ holds;
  since $w[a.. d]$ has period $c-a = d-b$,
  $w[b-i] = w[b-i+(c-a)] = w[b-i+(d-b)]$ holds;
  finally, $w[b-i+(d-b)] = w[d-i]$.
\end{proof}

Now we are ready to prove Lemma~\ref{lem:palperiod}.
\lemthree*
\begin{proof}
  We denote $N = |L_k|$ and $\tau = \tau_2$ in this proof.
  For each $i \in [1.. N]$, let $P_i$ be the length-$2\tau$ palindrome
  that is obtained by truncating the palindrome in $L_k$ whose center is the $i$-th smallest in $L_k$.
  Further, let $c_i$ be the center of $P_i$.
  Let $r = c_2 - c_1$. Since $c_1, c_2 \in B_k$ with $|B_k| = \tau$, it holds that $r < \tau$.
  By Lemma~\ref{lem:abg33}, string $X = w[c_1 - \tau.. c_2 + \tau )$ has period $2r$.
  Similarly, let $r' = c_3-c_2 < \tau$.
  Then string $Y = w[c_2 - \tau.. c_3 + \tau )$ has period $2r'$.
  Thus, $P_2 = w[c_2 - \tau .. c_2 + \tau )$ has periods $2r$ and $2r'$
  since it is a factor of both $X$ and $Y$.
  For the sake of contradiction, assume that $r \ne r'$.
  We only consider the case $r < r'$ since the other case can be treated symmetrically.
  Since $r+r' = c_3-c_1 < |B_k| = \tau$, $2r + 2r' < 2\tau = |P_2|$ holds,
  and thus,
  $P_2$ has period $r^\star = \gcd(2r, 2r')$ by Lemma~\ref{lem:plemma}.
  Note that $r^\star \le r'$ holds since $r^\star$ is a divisor of $2r'$ and $r^\star \le 2r < 2r'$.
  Since $r^\star$ is a divisor of $2r'$,
  string $Y$ also has period $r^\star$.
  The periodicity of $Y$ leads to an occurrence of a length-$2\tau$ palindrome
  that is identical to $P_2$ and starts at $c_2-\tau+r^\star = c_3-\tau + r^\star-(c_3-c_2) = c_3 - \tau + (r^\star - r')$.
  If $r^\star < r'$, then $c_2-\tau+r^\star < c_3-\tau$,
  which contradicts the definition of $P_3$~(see Fig.~\ref{fig:threepals}).
  Otherwise, if $r^\star = r'$, then $P_2 = P_3$ holds.
  Note that $r^\star$ is an even integer since $r^\star$ is the $\gcd$ of two even integers $2r$ and $2r'$.
  By Lemma~\ref{lem:overlappal}, the MPal at $(c_2+c_3)/2 = c_2 + r^\star/2 < c_3$ is a long palindrome,
  which contradicts the definition of $P_3$.
  Therefore, $r = r'$ holds.
  \begin{figure}[tb]
    \centering
    \includegraphics[width=0.6\linewidth]{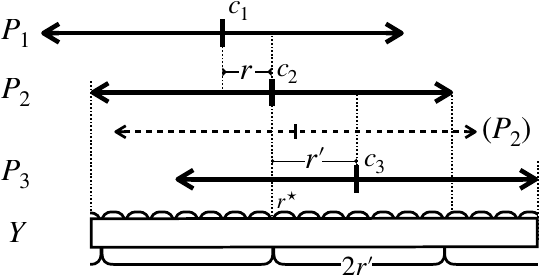}
    \caption{Illustration for a contradiction in the case where $r^\star < r'$.
      String $Y$ has period $r^\star = \gcd(2r, 2r')$.
      If $r^\star$ is strictly smaller than $r'$, another $P_2$, indicated by a dotted arrow,
      occurs between the original $P_2$ and $P_3$, a contradiction.
    }\label{fig:threepals}
  \end{figure}

  In general, the above discussion can be applied to
  any three consecutive length-$2\tau$ palindromes
  $P_{i}$, $P_{i+1}$, and $P_{i+2}$ obtained from $L_k$.
  Thus, by induction, it can be shown that $c_{i+1} - c_i = r$ holds for all $i \in [1.. N)$.
  Namely, $c_i = c_1 + (i-1)r$ holds for all $i \in [1.. N]$.
  Furthermore, since the factor $w[c_i - \tau.. c_{i+1} + \tau )$ has period $2r$ for each $i \in [1.. N)$,
  the factor $w[c_1 - \tau.. c_{N} + \tau )$ also has period $2r$.
  Now, let $w[e_L.. e_R]$ be the maximal factor of $w$ that contains $w[c_1-r.. c_1+r)$ and has period $2r$.
  Then, the inclusion $[c_1 - \tau.. c_{N} + \tau) \subseteq [e_L.. e_R]$ holds,
  and thus, $c_i = c_1 + (i-1)r \in [e_L.. e_R]$ for all $i \in [1.. N]$.
  Therefore, by Lemma~\ref{lem:abg:34}, 
  the radii of maximal palindromes centered at $c_i$ can be represented by
  at most two arithmetic progressions and a single integer.
  More precisely,
  (1) for $c_i$ with $c_i - e_L < e_R - c_i + 1$, the radii can be represented by $c_1 + (i-1)r - e_L$;
  (2) for $c_i$ with $c_i - e_L > e_R - c_i + 1$, the radii can be represented by $e_R - (c_1 + (i-1)r) + 1$; and 
  (3) there is at most one maximal palindrome centered at $c'$ that satisfies $c' - e_L = e_R - c' + 1$.
\end{proof}

\end{document}